\documentclass[draft,12pt,onecolumn]{IEEEtran}
\usepackage{amsfonts,amsmath,amssymb}
\usepackage{graphicx}
\usepackage{subfigure}
\usepackage{multirow}

\def\beqa{\begin{eqnarray*}}
\def\eeqa{\end{eqnarray*}}
\def\be{\begin{eqnarray}}
\def\ee{\end{eqnarray}}

\def\1{{\bf 1}}

\def\u0{\underline{0}}
\def\ie{{\em i.e.},~}
\def\cf{{\em cf.},~}
\def\eg{{\em e.g.},~}

% Gordon's new commands

%\numberwithin{equation}{section}
\newtheorem{lemma}{Lemma}[section]

\newlength{\defbaselineskip}
\setlength{\defbaselineskip}{\baselineskip}
\newcommand{\setlinespacing}[1]%
           {\setlength{\baselineskip}{#1 \defbaselineskip}}

\newcommand{\eeq}{\end{equation}}

\newcommand{\beql}[1]{\begin{equation}\label{#1}}

\newcommand{\beq}{\begin{displaymath}}
\newcommand{\eeqno}{\end{displaymath}}

\begin{document}
\title{Golden-rule capacity allocation for distributed
delay management in peer-to-peer networks\thanks{This 
work was supported in part by NSF CNS grant 1152320.}}
%GENI diaspora/p2p grant`

\author{
\begin{tabular}{ccc}
George Kesidis and Guodong Pang\\
% & Y. Jin
CSE, EE \& IME Depts,
%\\
%  & XXX\\
The Pennsylvania  State University, University Park, PA, 16803\\
% & XXX\\
\{gik2,gup3\}@psu.edu  
%& XXX
\end{tabular}
}

\maketitle

\begin{abstract}
We describe  a distributed
framework for resources management in peer-to-peer networks
leading to ``golden-rule" reciprocity,  a kind of one-versus-rest
tit-for-tat, so that the delays experienced by any given peer's messages
in the rest of the network are proportional to those experienced
by others' messages at that peer.
\end{abstract}

%~\\
%{\bf keywords:} ...

%\normalsize

%\vspace{-0.075 in}
\section{Introduction}

Consider a networked group of peers\footnote{Alternatively, 
a peer may be called a user or station, or, 
in the context of a network graph, a peer may
be called a node or vertex.}
 handling queries. In this paper, we consider
a cooperative group of peers operating,
\eg in support of a content-distribution system
\cite{IM08,Maggs08} or a social network
\cite{HKM13}.  Our framework is similar to the
cooperative Jackson-network 
model of \cite{TS13} 
whose aim is to minimize delays, however we assume
that the peers act in distributed/decentralized fashion
\cite{SVK12}.
%As \cite{Kelly98c,lossnets11} for a loss-networks setting, 
%we consider the case of more  distributed/decentralized setting.
We also assume that the aim of all peers is to achieve ``golden-rule"
reciprocity where the service capacity they allocate to foreign/transit
queries results in delay that is proportionate to that experienced
by exogenous (local) queries entering the network at that node.

This note is organized as follows. We motivate and set-up the problem
in Section \ref{setup-sec}. Our solution for golden-rule reciprocity
is given in Section \ref{soln-sec}. 
We discuss distributed computation of an eigenvector problem 
associated with the solution in Section \ref{distrd-sec}.
In Section \ref{numer-sec}, a numerical example is given and
we conclude with mention of future work in 
Section \ref{fw-sec}.

\section{Motivation and Problem Set-Up}\label{setup-sec}

We assume queries randomly forwarded so it's possible that
a query may be forwarded back to its originating peer.
Random forwarding in this manner may occur under ``onion"
routing \cite{TOR} for purposes of privacy protections. For example,
suppose that two peers $i$ and $j$ that are connected
(\cf $r_{i,j},r_{j,i}>0$)
share a private, symmetric cryptographic key $k_{i,j}$, \eg
\cite{Kurose-Ross}. Suppose when $i$ sends a query directly to $j$, and
$j$ forwards the query with $j$'s return address, \ie $j$ acts
as $i$'s proxy - so, it's possible that a peer later relaying
this query will send it to back to $i$ (as a query).

\subsection{Flow-balance for Randomized Forwarding}

A peer may have need to 
match a query to a cache of previously handled
queries \cite{HKM13} or using some other type of forwarding-information
base; thus queries may take a variable amount of time to process.
Each network peer $i\in\{1,2,...,N\}$  is assumed to have 
a limited ability to service queries,
herein modeled by i.i.d. exponentially distributed service
times with mean $1/\mu_i$. 

Peers not only consider their  query burden
(or those of their clients), for peer $i$
modeled as external arrival process with mean
rate $\lambda_{0,i}$, but also relay/resolve
queries on behalf of a much larger population
of peers.

Each peer can also control to which other peers 
it forwards queries which it cannot resolve.
%, \eg \cite{HKM13}.
Assume that peer $i$ can resolve a query 
with probability $r_{i,0}$. 

Suppose that the graph is connected so that
the network is stable, \ie
there exists a set of routing parameters
$r_{ij}\geq 0$ such that the
solution of flow-balance equations
\beqa
\Lambda_i = \lambda_{0,i} + \sum_{j=1}^N \Lambda_jr_{j,i}, &\mbox{with} &
 1 = \sum_{j=0}^N r_{i,j},
\eeqa
such that $\Lambda_i < \mu_i$ for all $i$,
where $r_{ij}=0$ if $i$ and $j$ are not directly connected. 
Defining the $N\times N$ routing matrix $R=[r_{i,j}]$ 
for $1\leq i,j\leq N$ and
identity matrix $I$, 
we can collectively write the the flow-balance equations as
\beqa
\underline{\Lambda}' (I- R)   =  \underline{\lambda}_0'
& \Rightarrow & \underline{\Lambda}'  =   \underline{\lambda}_0' B,
~~\mbox{where}\\
B = [b_{i,j}]  :=  ( I- R)^{-1},  & \mbox{\ie} &
\Lambda_i  =  \sum_{j=1}^N b_{j,i}\lambda_{0,j} ~~\forall i.
\eeqa
We assume that $ R$ is strictly sub-stochastic 
(not all $r_{i,0}$ are zero) so that $ I- R$ is non-singular.
Note that since $ R\geq 0$ (a non-negative matrix with
all entries $r_{i,j}\geq 0$) then also
\be
B & = & (I-R)^{-1} ~=~ I + R+ R^2+R^3 + ... ~\geq ~0.  \label{B-is-nonnegative}
\ee
Note that $b_{i,i}\geq 1$, \ie $\Lambda_i  \geq  \lambda_{0,i}$
(again, some queries may return (as queries) to their originating peer). 

\subsection{Queues for Local and Foreign Queries at each Peer}

Each peer $i$ has two queues, one for its local load 
\beqa
b_{i,i}\lambda_{0,i} 
\eeqa
and one for its foreign load $\Lambda_i - b_{i,i}\lambda_{0,i}$,
respectively served at rates
\be
\mu_{0,i}>b_{i,i}\lambda_{0,i} & \mbox{and} & 
\mu_i-\mu_{0,i} > \Lambda_i-b_{i,i}\lambda_{0,i}, ~~~\forall i,
\label{stable-queues}
\ee 
where we have made an implicit assumption of queue stability
on the available service
capacities $\mu_i,\mu_{0,i}$. 
These two queues are effectively multiplexed before forwarding  
to peers in $\{0,1,2,...,N\}$  because, for simplicity of 
exposition herein, we 
assume the same forwarding probabilities 
$r$ for both queues of a given peer. 

Note that this can be easily generalized 
to different forwarding probabilities for the local and foreign load
of a given peer, with good reason from a modeling perspective; 
\eg a peer may be able to resolve 
(forward to peer $0$) a local query with different probability
than a foreign one.

\subsection{The Jackson Network Model}

Assuming Poisson external arrivals 
(again, at rates $\underline{\lambda}_0$) gives a Jackson network model
with $2N$ queues.  The mean number of local queries queued at peer $i$  is
\beqa
L_{i,i} & = & \frac{b_{i,i}\lambda_{0,i}}{\mu_{0,i}-b_{i,i}\lambda_{0,i}}
\eeqa
The mean number of queries in $j$'s foreign-traffic queue is
\beqa
L_j &:= &\frac{\Lambda_j-b_{j,j}\lambda_{0,j}}
{\mu_j-\mu_{0,j} - (\Lambda_j-b_{j,j}\lambda_{0,j})}
\eeqa
The mean number of $i$'s jobs in $j$'s foreign-traffic queue is
\beqa
L_{i,j} &:= & 
\frac{b_{i,j}\lambda_{0,i}} 
{\Lambda_j-b_{j,j}\lambda_{0,j}} \cdot
\frac{\Lambda_j-b_{j,j}\lambda_{0,j}}
{\mu_j-\mu_{0,j} - (\Lambda_j-b_{j,j}\lambda_{0,j})}\\
& = & 
\frac{b_{i,j}\lambda_{0,i}} 
{\mu_j-\mu_{0,j} - (\Lambda_j-b_{j,j}\lambda_{0,j})}.
\eeqa
The mean {\em delay} (disutility) of foreign traffic at peer $i$ is
\beqa
\frac{1}{\mu_i-\mu_{0,i} - (\Lambda_i-b_{i,i}\lambda_{0,i})}.
\eeqa
Using Little's theorem \cite{Wolff89},
we can write the {\em disutility of}  (delay, cost to) peer $i$ 
as a combination of the total mean delay of peer $i$'s
traffic and the mean local delay of foreign traffic at peer $i$
(of all other traffic),
the latter modified by an 
``altruism" (cooperation) factor  $\alpha_i>0$:
\beqa
C_i & = & 
\frac{1}{\lambda_{0,i} }
\sum_{j=1}^N 
L_{i,j}  + 
 \alpha_i\frac{1}{\mu_i-\mu_{0,i} - (\Lambda_i-b_{i,i}\lambda_{0,i})}\\
& = & 
\frac{b_{i,i}}{\mu_{0,i}-b_{i,i}\lambda_{0,i}}
+\sum_{j\not = i} \frac{b_{i,j}} {\mu_j-\mu_{0,j} - (\Lambda_j-
b_{j,j}\lambda_{0,j})}\\
& & ~~
 +\alpha_i\frac{1}{\mu_i-\mu_{0,i} - (\Lambda_i-b_{i,i}\lambda_{0,i})}.
\eeqa

\section{Optimal Service Allocations for Golden-Rule
Reciprocity}\label{soln-sec}

Fixing $\mu_j$ for all $j$, 
note how $C_i$ also depends on $\mu_{0,j}$ for all $j\not = i$,
subject to (\ref{stable-queues}).
%Since $C_i$ is continuous in $\mu_{0,i}$ 
%on the domain defined by (\ref{stable-queues}),
%a Nash equilibrium point exists by Brouwer's fixed-point theorem
%\cite{Border85}.
Solving the first-order Nash equilibrium conditions
with $\mu_{0,i}$ the play actions:
\beqa
\partial C_i / \partial \mu_{0,i} & = & 0 ~~\forall i,
\eeqa
leads to the unique  solution
\be
\mu_{0,i}^*  & = &  
\frac{\sqrt{b_{i,i}}}{\sqrt{b_{i,i}}+\sqrt{\alpha_i}} (\mu_i-\Lambda_i)
+ b_{i,i}\lambda_{0,i} \label{mu0-equ}\\
\Rightarrow ~~
\mu_i-\mu_{0,i}^*  & = &  \frac{\sqrt{\alpha_i}}{\sqrt{b_{i,i}}+\sqrt{\alpha_i}} 
(\mu_i-\Lambda_i)
+\Lambda_i- b_{i,i}\lambda_{0,i}. \nonumber
\ee
%This solution is a local minimum of $C_i$, \ie
%$\partial^2 C_i / \partial \mu_{0,i}^2|_{\mu_{0,i}^*}>0$.
Since $C_i$ is concave in $\mu_{0,i}$ for all $i$,
$\underline{\mu}_0^*$ is the unique Nash  equilibrium \cite{Rosen65}.
Thus, (\ref{stable-queues}) holds when 
all $\alpha_i \in(0,\infty)$,
with mean delay diverging in the foreign queue as $\alpha\downarrow 0$
($i$ increasingly selfish), and 
mean delay diverging in the local queue as $\alpha\rightarrow \infty$
($i$ increasingly selfless).

%\subsection{Social Welfare}

%Suppose that subject to (\ref{stable-queues}), we wish to 
%minimize the mean delay of {\em all} queries in the network
%over $\{\mu_{0,i}\}_i$. This corresponds to minimizing
%\beqa
%\sum_i  \frac{\lambda_{0,i}}{\sum_k \lambda_{0,k}}
%\left(\frac{b_{i,i}}{\mu_{0,i}-b_{i,i}\lambda_{0,i}}
%+\sum_{j\not=i} \frac{b_{i,j}}{\mu_i-\mu_{0,i}-
%(\Lambda_i-b_{i,i}\lambda_{0,i})}\right)
%\eeqa
%%{\em The prior Nash equilibrium maximizes this social welfare when 
%%every peer $i$ takes $\alpha_i=1$, \ie weighs their
%%own local queries' mean delay equally with that of foreign queries?}
%But social welfare, as in \eg \cite{Kelly98c}. A problem 
%with such goals is that it does not consider the
%diversity of demand and service capacities
%of individual peers.

\subsection{Golden-Rule Reciprocity}

Rather than exploring the efficiency of Nash equilibria
according to a global ``social welfare" function (as in \eg \cite{Kelly98c}),
suppose we simply want a Nash equilibrium such that: for all peers $i$,
the mean delay experienced by $i$'s queries in the rest of the network
is proportional to the the mean delay in $i$'s foreign-traffic queue.
That is, using $\underline{\mu}_0=\underline{\mu}_0^*$ for all $i$,  
there is a constant $\kappa$ (constant across all peers) such that:
\be
\forall i, ~~\sum_{j\not = i} 
\frac{1+\sqrt{b_{j,j}/\alpha_j}}{\mu_j - \Lambda_j} b_{i,j} & = & 
 \kappa \frac{1+\sqrt{b_{i,i}/\alpha_i}}{\mu_i - \Lambda_i}.
\label{ev-equ}
\ee
Recall $B:=(I-R)^{-1}\geq 0$  by (\ref{B-is-nonnegative}), and define
a zero-diagonal version of $B$,
\beqa
\tilde{B} & := & B-{\sf diag}\{B\} ~\geq ~0,
\eeqa
where ${\sf diag}\{B\}$ is the diagonal matrix 
with entries $b_{i,i}$ (\ie those of $B$).

\begin{lemma}
If a square matrix $R\geq 0$ is irreducible, then the following
matrix is also non-negative and irreducible:
\beqa
(I-R)^{-1}-{\sf diag}\{(I-R)^{-1}\}
& =: & 
\tilde{B} .
\eeqa
\end{lemma}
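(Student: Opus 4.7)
My plan is to exploit the Neumann-series representation $B=(I-R)^{-1}=\sum_{k\ge 0}R^k$, which is valid here because $R$ is sub-stochastic so $R^k\to 0$. Non-negativity of $\tilde B$ then falls out immediately: each $R^k\ge 0$ entrywise, so $B\ge 0$, and zeroing out the diagonal can only decrease or preserve entries, so $\tilde B\ge 0$.

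For irreducibility, I would use the standard graph-theoretic characterization: a non-negative $N\times N$ matrix $M$ (with $N\ge 2$) is irreducible iff the directed graph with an edge $i\to j$ whenever $M_{i,j}>0$ is strongly connected, equivalently iff for every $i\ne j$ there exists $k\ge 1$ with $(M^k)_{i,j}>0$. Since $R$ is assumed irreducible, for each ordered pair $i\ne j$ there is some $k_{ij}\ge 1$ such that $(R^{k_{ij}})_{i,j}>0$. Because $i\ne j$ makes $(R^0)_{i,j}=I_{i,j}=0$, the series expansion yields
\beqa
\tilde B_{i,j} \;=\; B_{i,j} \;=\; \sum_{k\ge 1}(R^k)_{i,j} \;\ge\; (R^{k_{ij}})_{i,j} \;>\; 0.
\eeqa

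Thus every off-diagonal entry of $\tilde B$ is strictly positive. The directed graph of $\tilde B$ is therefore the complete digraph on $\{1,\dots,N\}$ minus self-loops, which is trivially strongly connected for $N\ge 2$. Hence $\tilde B$ is irreducible, completing the proof.

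The argument is essentially a one-liner once the Neumann expansion is invoked; the only step that takes any thought is observing that the $k=0$ (identity) term, which is the \emph{only} term contributing to the diagonal subtraction, contributes nothing off-diagonal, so removing $\mathsf{diag}\{B\}$ does not destroy any of the positivity produced by the higher powers of $R$. No delicate spectral or Perron--Frobenius machinery is needed.
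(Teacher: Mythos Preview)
Your proof is correct and follows essentially the same route as the paper's: both invoke the Neumann expansion $B=\sum_{k\ge 0}R^k$ and then observe that the off-diagonal entries of $\tilde B$ dominate those of (powers of) $R$, so irreducibility of $R$ forces irreducibility of $\tilde B$; you in fact prove slightly more, namely that \emph{every} off-diagonal entry of $\tilde B$ is strictly positive, whereas the paper is content with $\tilde b_{i,j}\ge r_{i,j}$. One small slip in your closing paragraph: the $k=0$ identity term is not the \emph{only} contributor to the diagonal of $B$ (for instance $(R^2)_{i,i}$ can be positive when $R$ is irreducible), but since subtracting $\mathsf{diag}\{B\}$ touches only diagonal entries anyway, this does not affect the off-diagonal positivity your argument actually uses.
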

\begin{proof}
Clearly $\tilde{B}\geq 0$ by (\ref{B-is-nonnegative}), \ie since
$B:=(I-R)^{-1}\geq 0$. Irreducibility is a property of
the off-diagonal elements of a matrix that are zero. 
$\tilde{B}$ is irreducible because 
\beqa
\forall i\not = j,~~
\tilde{b}_{i,j} & = &  (R+R^2+R^3+...)_{i,j} ~\geq ~ r_{i,j},
\eeqa
and $R$ is irreducible by hypothesis.
\end{proof}
Now since $\tilde{B}\geq 0$ is irreducible,
by the Perron-Frobenius theorem \cite{horn0} (see also Theorem 5, p. 9,
of \cite{Noutsos}), there
is a positive right-eigenvector $\underline{v}$  corresponding
to a maximal eigenvalue ($\kappa > 0$) of $\tilde{B}$, \ie
$\tilde{B}\underline{v}=\kappa\underline{v}$ as (\ref{ev-equ}).

Clearly then, to achieve golden-rule reciprocity 
(\ref{ev-equ}), each peer $i$ should set
their altruism parameter such that
\be
v_i & =&  \frac{1+\sqrt{b_{i,i}/\alpha_i}}{\mu_i - \Lambda_i} \nonumber \\
 \Rightarrow~~ \forall i, ~~\alpha_i   & = &  
b_{i,i}(v_i(\mu_i - \Lambda_i)-1)^{-2},
\label{golden-rule-condition}
\ee
which is feasible when $\forall i, ~~v_i(\mu_i - \Lambda_i)  >  1$, \ie
\be\label{feasible-condition}
\forall i, ~~\mu_i  & > & v_i^{-1}  + \Lambda_i.
\ee

\subsection{Summary}

%In for example \cite{Kelly98c}, a Nash equilibrium arrived
%at in distributed and naturally greedy fashion, was
%shown to also maximize a social welfare, \ie the Nash
%equilibrium is efficient in this sense.
In our Jackson network setting, given the external demand
$\underline{\lambda}_0$ 
and assuming strictly sub-stochastic $R$, and hence 
$\tilde{B}\geq 0$, 
are irreducible: 
\begin{enumerate}
\item  the total loads $\underline{\Lambda}$ are naturally
computed in distributed fashion by flow-balance,
\item peers compute the strictly
positive eigenvector $\underline{v}>0$ of $\tilde{B}$ 
in distributed fashion (\cf discussion
below),
\item each peer allocates sufficient service capacity $\mu$ to achieve 
(\ref{feasible-condition}) 
(alternatively, if service capacity $\mu_i$ of peer $i$ is limited,
then their demand $\lambda_{0,i}$ is reduced/thinned to achieve 
(\ref{feasible-condition})),
\item each peer sets their altruism parameter $\alpha$  as
(\ref{golden-rule-condition}), and
\item each peer sets their service allocations for their foreign
and local queues according to (\ref{mu0-equ}).
\end{enumerate}

\section{Discussion: 
Achieving Golden-Rule Reciprocity in Distributed/Decentralized 
Fashion}\label{distrd-sec}

The routing information $R$ could simply be shared among all the peers
in the manner of link-state routing based on Dijkstra's algorithm
\cite{Kurose-Ross}.
Alternatively,
Section 2 of \cite{KM04} describes a method for {\em distributed}
computation of the principle eigenvector of a routing/graph matrix
based on orthogonal iteration and 
fast-mixing random walks \cite{KDG03}.
Though the eigenvectors of $R$ and $B=(I-R)^{-1}$ coincide, those
of $R$ and $\tilde{B}=B-\mbox{diag}\{B\}$ necessarily do not.
To accommodate $\tilde{B}$, we 
now specify a modification of the orthogonal iteration 
stated as Algorithm 1 of \cite{KM04} (where their $Q=\underline{v}>0$
and their $A=\tilde{B}$).
Let $\Omega(\underline{v})$ be the normalization of $\underline{v}$.

\begin{itemize} 
\item[0:] initialize $B_0 = I$ and $\underline{v}_0$
 as a random, non-zero, and non-negative vector.
\item[$k$:] For step $k>0$, do:
\begin{itemize}
\item[$k.1$:] $B_k = I + B_{k-1}R$
\item[$k.2$:]
$v_{k}= \Omega((B_k - \mbox{diag}{B_k})v_{k-1})$
\item[$k.3$:]
If the change in $\underline{v}$ and $B$ are not negligible,
go to step $k+1$, else stop.
\end{itemize}
\end{itemize}
 
Note that $B_k$ converges to $B=(I-R)^{-1}$ (again, $R$ is strictly
sub-stochastic by assumption).
The approach used for decentralized (ortho)normalization of Section 2.2
of \cite{KM04}, again based on fast-mixing random walks \cite{KDG03},
can also be used to update the $B$ matrix in step $k.1$.

\section{Numerical Example}\label{numer-sec}

For a simple example network of three 
peers\footnote{Computed at http://www.bluebit.gr/matrix-calculator}
with routing probabilities
\beqa
R =\frac{1}{6} 
\left[ \begin{array}{ccc} 
0 & 2 & 3\\
2 & 0 & 3 \\
3 & 1 & 0
\end{array}\right] 
\Rightarrow B=
\left[ \begin{array}{ccc} 
2.062 & 0.937 & 1.500\\
1.312 & 1.687 & 1.500\\
1.250 & 0.750 & 2.000
\end{array}\right],
\eeqa
and maximum eigenvalue of $\tilde{B}:= B-\mbox{diag}(B)$
is $\kappa = 2.366$ with associated positive right-eigenvector
\beqa
\underline{v} & = & [0.576 ~~ 0.641 ~~ 0.507]'.
\eeqa
If exogenous loads are $\underline{\lambda}_0 =
[1~~ 2 ~~1]'$, then by flow-balance, total
loads are 
\beqa
\underline{\Lambda}' :=
\underline{\lambda}_0'B :=
\underline{\lambda}_0' (I-R)^{-1}=
[5.9 ~~5.061 ~~ 6.5].
\eeqa
Taking the service rates $\mu$ sufficiently
large to 
satisfy (\ref{feasible-condition}),  \eg as
\beqa
\underline{\mu}  &=&   
[8 ~~7~~ 9]' > 
[7.636 ~~6.621 ~~ 8.472]'  = \underline{v}^{-1}+\underline{\Lambda},
\eeqa
the golden-rule cooperation parameters
$\alpha$ are computed using (\ref{golden-rule-condition}):
\beqa
\underline{\alpha} & = & 
[46.9 ~~28.6~~ 28.0]'  
\eeqa
Finally, the bandwidth allocations for the local 
and foreign queues
under golden-rule cooperation according to (\ref{mu0-equ})
are, respectively,
\beqa
\underline{\mu}_0 & = &[2.43 ~~3.75~~ 2.32 ]'  \\
\Rightarrow~~
\underline{\mu}-\underline{\mu}_0 & = &[5.57 ~~3.25 ~~ 6.68]'  
\eeqa

\section{Future Work}\label{fw-sec}

Variations of the  disutilities $C_i$ given above can
be studied, possibly
leading to similar golden-rule reciprocity solution frameworks.

Also, it's well known how  peer-to-peer
systems may experience degraded performance due to various
types of selfish behavior, see \eg
\cite{Jin13,GK13}.
Future work on this problem includes consideration of
how to police and
 mitigate defectors from golden-rule reciprocity, 
particularly in the process through which routing information
$R$ is shared among the peers and the maximal, positive
eigenvalue $\kappa$ of is computed. That is, some peers 
may want their traffic to 
be treated {\em better} by their peers than they themselves
treat their peers' traffic.  
%Is there a robust (Byzantine-proof?) extension of \cite{KM04}
%to compute the eigenvector? Given that,  can the neighbors of
%a peer $i$ estimate $\Lambda_i$ and gauge delays through $i$ of
%relayed traffic to police (\ref{feasible-condition}) and (\ref{golden-rule-condition})?}

%\singlespacing

\end{document}